\begin{document}
\conferenceinfo{}{}
\title{Wild Card Queries for Searching Resources on the Web}

\numberofauthors{2}
\author{
\alignauthor Davood Rafiei\\ 
  \affaddr{Computing Science Department}\\
  \affaddr{University of Alberta}\\
  \email{drafiei@cs.ualberta.ca}
\alignauthor Haobin Li\\
  \affaddr{Computing Science Department}\\
  \affaddr{University of Alberta}\\
  \email{haobin@cs.ualberta.ca}
}

\maketitle

\newtheorem{definition}{Definition}
\newtheorem{theorem}{Theorem}
\newtheorem{corollary}{Corollary}
\newtheorem{lemma}{Lemma}

\newcommand{\bs}{\bigskip}
\newcommand{\ms}{\medskip}
\newcommand{\s}{\smallskip}
\newcommand{\n}{\noindent}

\begin{abstract} 

We propose a domain-independent framework for searching and retrieving 
facts and relationships within natural language text sources.
In this framework, an extraction task over a text collection
is expressed as a query that  combines text fragments with wild cards, 
and the query result is a set of facts in the form of unary, 
binary and general $n$-ary tuples.
A significance of our querying mechanism is that, despite being both
simple and declarative, it can be applied to a wide range of extraction tasks.
A problem in querying natural language text though is that a user-specified 
query may not retrieve enough exact matches. Unlike term queries which 
can be relaxed by removing some of the terms (as is done in search engines), 
removing terms from a wild card query without ruining its meaning is 
more challenging.
Also, any query expansion has the potential to introduce false positives. 
In this paper, we address the problem of query expansion, and also analyze 
a few ranking alternatives to score 
the results and to remove false positives.
We conduct
experiments and report an evaluation of the effectiveness of our 
querying and scoring functions. 

\end{abstract}





\section{Introduction}
\label{sec:intro}
The World Wide Web contains a vast amount of information and is a rich source 
for data extraction, but manually extracting data from the Web is a tedious 
and time consuming process, especially when a large amount of data matches the 
extraction criteria. Example extraction tasks include compiling a list of 
Canadian writers, finding a list of medications for a disease,
etc. Unless such lists have already been compiled and made available on the 
Web, one has to query a search engine, examine the pages returned, and extract 
a handful of instances from each page (if there is any at all). 
The problem is further complicated by the flexibility of natural languages.
Consider the example of extracting \emph{Canadian writers}; many bona fide writers are not referred to as writers. Instead, they are often coined as \emph{authors}, \emph{novelists}, \emph{journalists}, etc. If only the phrase ``Canadian writers'' is used in the query, many qualified instances will not be extracted, thus the extraction quality is compromised.
Many previous data extraction systems either focus on a more specific task 
by imposing tight restrictions on the type of data that can be extracted
(e.g. finding course offerings and job postings) or are only applicable to
documents that follow a specific formatting (e.g. wrappers). 
For example, the KnowItAll~\cite{knowitallshort} system can extract hyponyms of a user-specified class. The online prototype of the system
is further extended to support a few more specific binary relations (e.g.
{\it X ``ceo of'' Y}). 
A challenge facing many data extraction systems in general is that
the extraction task often is not easy to define or a definition may not be 
accurate, leading to low precision and recall.
Limiting the extraction task to a few predefined classes is one
way to reduce the complexity of the problem.

In this paper, we address the problem by introducing
a framework that allows an extraction task to be encoded as a simple query.
A query in this framework is a natural language sentence or 
phrase\footnote{We assume query phrases are
in English, but our framework should be applicable to other languages as well.} 
with some wild cards, and the result
of a query is a ranked list of matching tuples. 
For instance, given 
the query ``\% is a car manufacturer'', the output is expected to
be a ranked list of car manufacturers, preferably the real car manufacturers 
ranked the highest. This query only uses one wild card, here denoted with
{\it \%}. In general, a query can use more than one \% wild card, and 
the result of the query in this case is a table with one column for each 
occurrence of the wild card. 

Our first contribution is a declarative querying framework
for data extraction. Integration of wild cards in our queries
can generally reduce the number of queries that must
be issued, hence simplifying the extraction task.
In our earlier example about 
\emph{Canadian writers}, for instance, a user can use one type of wild 
card to indicate that terms similar to \emph{writers} should also 
be considered. Another type of wild card may be used to indicate 
a probable position of the desired data, from which values 
can be extracted. Combining such wild cards
with natural language phrases can provide a simple but powerful interface,
which can handle much more extraction tasks than previous systems.
There is a close correspondence between our queries and star-free regular 
expressions; our queries make use of certain abstractions geared 
toward natural languages which make it simpler to write queries.
Since the result of a query is a relation, our queries with some 
syntactic sugar can be integrated in the {\it from} clause of SQL queries.
 
Our second contribution is the idea of query expansion through a set of
declarative rewriting rules between paraphrases. 
Since a given user query may not retrieve an adequate number of facts,
rewriting rules are generally expected to improve
the coverage of the queries and the quality of the results.
Our experiments, as reported in Section~\ref{sec:exp}, show that increasing the
number of rewritings can improve both recall and precision.

As our third contribution, we address the problem of ranking 
in the context of rewriting rules. We propose a few algorithms for
ranking the extracted results where each algorithm exploits some of the
relationships that exist between the set of matching tuples and
also between the set of query patterns.
We use the general term {\it pattern} to refer to both query and query rewriting.
When the results are ranked, it is possible to set a cutoff threshold to filter 
invalid rows from the result, making the final results more accessible to the user.

Finally, as our last contribution, we theoretically analyze our
algorithms and experimentally evaluate their behaviour in the setting
of the Web. Our experimental results include comparisons 
with some alternative approaches to data extraction. 

The rest of the paper is organized as follows. 
Section~\ref{sec:wildcard} describes both the syntax and the semantics of 
wild cards, as well as the queries in our framework.
An overview of our query evaluation in the context of the Web is given in 
Section~\ref{sec:overview}. Section~\ref{sec:rewriting}
discusses the details of our rewriting rules and patterns.
Our ranking algorithms are discussed in Section~\ref{sec:ranking}.
Experimental results are presented in Section~\ref{sec:exp}
and the related work is reviewed in Section~\ref{sec:related}.
We end the paper with conclusions and future work in Section~\ref{sec:con}.

\section{Wild Card Queries}
\label{sec:wildcard}

The use of wild cards is prevalent in many areas of computer science, with
examples found in SQL, operating system shells and scripting languages such 
as Perl, Awk and Python. Unlike many of these systems, our introduced wild 
cards iterate over the domains of parts of speech or other meaningful 
natural language word groupings. In particular,
we introduce two types of wild cards, namely * and \%. 

\medskip
\noindent 
{\bf \% wild card}:  The \% wild card represents one or more noun phrases. 
A noun phrase may consist of one or multiple words; for instance, ``movie'' 
and ``action movie'' are both noun phrases. This wild card, when used
in a query, indicates the location of a noun phrase or noun phrases that 
should be extracted.
For example, the query ``summer movies such as \%'' will extract noun 
phrases \emph{Harry Potter}, \emph{Shrek}, and \emph{Spiderman} from the 
following sentence: ``Popular summer movies such as Harry Potter, Shrek and 
Spiderman appeal to audience of all ages.'' 

\medskip
\noindent
{\bf * wild card}: The * wild card represents a set of phrases 
with the same or similar meanings to a given phrase. 
Consider the task of finding a listing of summer movies; we may type
the query ``\% is a summer movie''. However, some bona fide movies are 
often referred to as ``films'', ``blockbusters'', and so on. 
In a naive approach, one may have to try other terms similar 
to ``movie'' manually, save the results each 
time, and put the results together at the end. The naive method is tedious 
and inefficient. In our queries, a term may be enclosed within a pair of * 
to instruct that the search should be extended to include terms and 
phrases similar to the given one.
For example, the user can re-formulate the query as ``\% is a summer *movie*'', 
and the query will be automatically expanded to include related 
queries such as ``\% is a summer film'', ``\% is a summer blockbuster'', etc. 

It is feasible to consider other wild cards. For instance, we could have 
wild cards that only match verbs, adjectives, or a union of nouns, verbs 
and adjectives.
It is also possible to have a wild card that matches a prefix or suffix of
a term or a fixed-length sequence of terms.
In an attempt to keep the syntax of our queries simple, 
this paper only considers the two wild cards
\% and *, as discussed above. The following is a list of example queries:
 
\begin{itemize}
\item \% is a *country*
\item \% is a summer *blockbuster*
\item \% invented the light bulb
\item Google *acquired* \%
\end{itemize}

A query may use any number of wild cards. Given a query with {\it k} 
\% wild cards, the result of the query is a table
with {\it k} columns, one for each \% wild card. 
It is natural to assume that the result of a query is ranked and each row is 
assigned a score as an indication of the level of support the row receives. 
This score may depend on the size and the coverage of the
text collection that is being queried and the set of query rewritings 
that are being used.
Section~\ref{sec:ranking} discusses a few measures to rank the matching tuples 
of a query.

A query can have any number of * wild cards. Given a query $q$ with some * wild
cards, let $q_1$, $\ldots$, $q_s$ be the set of queries that are obtained by 
replacing each * wild card with {\it similar terms}. A row matches $q$ if it matches 
at least one of $q_1$, $\ldots$, $q_s$; the score of the row for $q$ may be
an aggregation 
of the scores of the row for $q_1$, $\ldots$, $q_s$.
For our purpose, two terms are considered {\it similar} if they have the 
same meanings (e.g. synonyms), one is a generalization of the other,
or the two terms can be used  interchangeably in the same context.
The similar terms can be often obtained 
from dictionaries, thesaurus, online corpus \cite{lin97using}, etc. 

\section{Evaluating Wild Card Queries -- An Overview}
\label{sec:overview}

This section provides an overview of evaluating wild card queries over
a text corpus. Without loss of generality, our discussion in this section
is centered 
around the Web and uses the query ``\% is a *blockbuster*'' as  an example.
In particular, we consider the scenario where the extraction
engine is built on top of a search engine.
Naturally the same steps can be taken when the source data 
is stored locally, with a difference that a local collection may be
better indexed and the queries can be 
better optimized. To limit the scope of the paper, we do not 
address the issues related to indexing and query optimization.

\s
\n {\bf Step 1 - query flattening}
As the first step, the query is analyzed and expanded by replacing
the words enclosed by pairs of * wild cards (if any) with their similar terms. 
In the given query, the word ``blockbuster'' is enclosed by *'s; 
similar words to blockbuster, based on an online system \cite{lin97using},
are ``movie'' and ``film''. Two new queries, ``\% is a movie'' and
``\% is a film'' are formed and added to the expanded query set. 
In general, more queries may be added if multiple synonyms are found.

\s
\n {\bf Step 2 - query rewriting}
In the next step, each query in the query set is passed to a Part-Of-Speech (POS) tagger.
Each tagged query is compared with a set of precompiled patterns for possible 
rewritings. The result of query tagging is not always reliable, in particular 
for short queries. To account for those cases, queries are also rewritten
using rules that do not require tagging.
Let's consider the query
``\% is a movie'' first; after tagging, the query conforms to the pattern 
``NP1 is a(n) NP2'' where NP stands for a noun phrase; note that
the wild card \% matches a noun phrase, as we defined earlier.
A pattern may be found relevant to a pre-determined class, based on a
specific relationship it describes, and may be rewritten by other
patterns in the same class.
The pattern ``NP1 is a NP2'', in particular, 
belongs to the \emph{hyponym} class, since the template indicates that NP1 
is a (hyponym of) NP2. Other patterns in the hyponym class include 
``NP2 such as NP1List'', ``NP2 including NP1List'', etc. All patterns 
in the matching class (i.e. the hyponym class) are instantiated according 
to the matched query. Thus, the query set is expanded with extra queries like ``movies such as \%'' and ``movies including \%''. 
Section~\ref{sec:rewriting} discusses our query rewritings in more detail.
Similarly, the query ``\% is a blockbuster'' also matches a pattern in the 
hyponym class, and the query set is further expanded. If the query cannot 
match any pattern, no query expansion will occur at this step. 

\s
\n {\bf Step 3 - information retrieval engine}
As the third step, all queries in the query set are sent to a search engine. For each 
query, the matching snippets are downloaded for 
further processing. When there is a large number of matches, 
only a fixed number of them are selected. 
HTML tags are stripped from downloaded snippets for each query and the remaining text
is analyzed to identify the pieces that match the query. 
Noun phrases that appear 
in the positions of \% wild cards of a query are extracted from the text 
and are saved in the result set. Words other than noun phrases should not 
be extracted even if they appear in target locations. Suppose the query 
``\% invented the light bulb'' is sent to a search engine and the following 
two snippets are among those returned.
\begin{itemize}
\item Thomas Edison is often said to have \emph{invented the light bulb}.
\item We all learned in our history classes that Thomas Edison \emph{invented the light bulb} in 1879.
\end{itemize}
The POS tagger identifies that the word ``have'' in the first snippet is not 
a noun phrase, while ``Thomas Edison'' from the second snippet is. 
Therefore, the phrase ``Thomas Edison'' is extracted but ``have'' is not.

\s
\n {\bf Step 4 - relevance ranking}
The result of extraction in the previous step is a set of rows; for the given
example, each row is a noun phrase. 
A ranking algorithm is applied to the extracted set.
Section~\ref{sec:ranking} gives the details of our ranking algorithm.
Finally, a sorted list of rows is returned.
The rest of this paper will focus on {\it query expansion} and 
{\it relevance ranking}.

\section{Rewriting Queries}
\label{sec:rewriting}

A challenge in querying 
and information retrieval from 
natural language 
text is the possibility of a mismatch between the expressions of queries 
and texts that have the relevant information. 
A fact can potentially be expressed in many different contexts and 
a query that gives one context can miss many qualified candidates.
We propose rewriting rules to express the set of transformations that 
can be applied to a wild card query leading to alternative query expressions
that are expected to return the same or semantically related results but are
syntactically different.
Query rewriting is expected to increase recall,
as it can be seen from our example in Section~\ref{sec:overview}.
Our experiments in Section~\ref{sec:exp}
confirm that there is also a correlation between the number of rewritings and
the precision of the retrieved results. There are more evidence 
in favor of rewriting natural language questions.
For example, at TREC-10 QA evaluation, the winning system used an 
extensive set of rewritings as its only resource \cite{voorhees01overview}.

\subsection{Rewriting Rule Language}

The rewriting rule language lists different ways of rewriting a query.
Each rule here is of the form \emph{rule-head $\to$ rule-body}. 
A rule head consists of one or more regular expressions, and a rule body 
consists of one or more rewritings with place holders.
Multiple regular expressions in the head or
rewritings in the body can be listed each in a new line.
A rule matches a query if any one of the regular expressions in the head 
matches the query.
When a rule matches a query, the query is expanded with all 
rewritings in the rule body. 
For each match, keywords from the query 
may be remembered using {\it capturing groups} (e.g. parentheses)
and the remembered values may be recalled using back references in the 
rule body.
The remembered values can be transformed (e.g. from a singular noun to a 
plural noun) before being used in the rewritings.
Each transformation is usually done by looking up a table, from a set of
tables compiled in advance from a dictionary, thesaurus, and other 
offline sources.
This allows us to write generic rewritings that can match a large number
of queries. 
Our rewriting rule language is extendable and one can add more rules 
as they become available.
Here is an example of a rule. Given the query ``movies such as \%'',
the following rule generates ``\%, and other movies'' and ``\% is a movie'' as
possible rewritings.
{\tt
\begin{tabbing}
(.+),? such as (.+)\\
(.+),? including (.+)\\
$\rightarrow$\\
\$2, and other \$1	\= \&\& plural(\$1)\\
\$2 is a \$1		\> \&\& singular(\$1)\\
\end{tabbing}
}
Let $n$ denote the average number of rewritings produced for each
query. If a query uses $k$ star wild cards and each of these wild cards
is replaced with $m>0$ similar terms on average, the query expansion would 
produce $m^k(n+1)$ queries. 
To appreciate the power of the rewriting rule language, consider an alternative 
scenario where one has to enumerate and try many of those queries manually.

\subsection{Compiling Rewriting Rules}

Gathering a comprehensive set of rewritings for a large set of 
possible queries in advance can be challenging, especially if not much 
is known about the queries in advance. 
However, many short queries fall into some sort of common templates for which 
generic paraphrasing rules usually apply. In English, two classes of generic 
rules that we can easily
identify are {\it hyponyms} and {\it morphological variants}.
A hyponym pattern describes lexico-syntactic relations that can
be used to infer one element is a hyponym of another within a sentence.
Hearst gives a list of hyponym patterns~\cite{Hearst92}.
A sample of hyponym patterns (either from Hearst's or hand-crafted by us)
can be found in {Table}~\ref{tab:hypo}.

\begin{table}[htb]
\centering
\begin{tabular}{c} \hline
NP1 \{,\} ``such as'' NP2List \\
``such'' NP1 ``as'' NP2List \\
NP1 \{,\} ``especially'' NP2List \\
NP1 \{,\} ``including'' NP2List \\
NP2List ``and other'' NP1 \\
NP2List ``or other'' NP1 \\
NP2, ``a(n)'' NP1 \\
NP2 ``is a(n)'' NP1 \\
NP1 NP2 \\
\hline\end{tabular}
\caption{Hyponym patterns}
\label{tab:hypo}
\end{table}

The morphological variants of verbs are useful for rewriting many queries 
that contain verbs. A given query may be rewritten by simply changing its verb 
tense and without much affecting its meaning.
Many extraction tasks are expressed in the form of 
``Subject transitive-Verb Object'' which can be rewritten in a passive form 
and vice versa. For example, if a user wants to find out who 
invented the light bulb, she can express the extraction as 
``\% invented the light bulb''; a rewriting of the query is 
``the light bulb was invented by \%''.
Our morphological patterns enumerate different 
verb tenses (e.g. present tense, past tense, \ldots) and use both active and 
passive forms. 
We were able to express
all the relationships described by patterns in the hyponym and morphological
classes as rules in our rule language.

Although generic patterns and rewritings can be applied to a wide range of
queries, it is not hard to find queries where no generic pattern is applicable 
or sufficient. With a rule language, many of these rules can be expressed and
regularly revised as more queries are seen.
There is also an active research on automatically generating paraphrases
in more specific settings~\cite{lin01dirt,RavichandranH02}, which may also 
be encoded as rules in our framework. 

\section{Bringing Order to Results}
\label{sec:ranking}
The data extraction process discussed in this paper 
can accumulate a large set of candidates. Not all candidates 
are correct, meaning that a user would not consider them as good matches.

\subsection{Sources of Errors}
A query typed by user can match many non-relevant rows, in the sense that they
may not be anticipated by the user. For instance, the query
``\% is a country'' matches the statement ``Joe is a country singer,''
thus Joe will be added to the list of country names.
Through a heavy natural language processing (NLP), one
may reduce the number of those false positives;
however, these techniques are generally computationally intensive and 
may not scale up well to large volumes of data and queries (e.g. on the Web).
In general, broad queries are likely to match more false positives.
Rewriting queries can also broaden the queries and introduce additional
false positives.
Also false positives galore on the Web;
since the published content may not be verified for correctness,
statements, such as ``New York is the capital of the 
United States'' are not rare. 

One more source of error is due to using a POS tagger.
Although POS taggers produce good results most of the time, the accuracy
usually depends on factors such as the number of words in the lexicon, etc.
Sometimes verbs are mis-classified as noun phrases, or vice versa.
Since correct extractions are inter-mixed with errors, it is important to rank all 
candidates in terms of their relevance to the user query. 

\subsection{Ranking Heuristics}
\label{subsec:heuristics}
Given a query and a set of rewritings, we want to find out 
meaningful ways of ordering the results. 
In this section, we present a few heuristics before we analyze them within 
a more general ranking scheme in the next section.

\medskip
\noindent {\bf Number of Matched Pages or Documents (NPages)}:
The relevant matches of a query are likely to appear frequently within the 
query pattern \footnote{The query would not have been issued in the first
place if it is assumed otherwise.} 
or one of its rewritings. 
One heuristic is to rank a tuple based on the number of pages or documents
in which it matches the query or one of its rewritings. 

\medskip
\noindent {\bf Mutual Information (MI)}
Another ranking scheme which has been used previously to quantify the
relationship between two random variables is the {\it Mutual Information} (MI).
If we denote the probability that a document
contains the text of query {\it q} (ignoring wild cards) with $P(q)$, the probability that
a document contains a candidate {\it r} with $P(r)$, and the probability that
a document contains a proper encoding of {\it r} in {\it q}
with $P(q,r)$, then the mutual information between {\it q} and {\it r}
is defined as 
\[
MI(q,r) = log\frac{P(q,r)}{P(q)P(r)}.
\] 
In some formulations of the mutual information,
the above formula is multiplied by $P(q,r)$~\cite{Chakrabarti2002mtw}.
This measure is used in the past, for instance to evaluate the association between 
words~\cite{Church-Hanks-1989}, and also between the instances of a class 
and a discriminative phrase ~\cite{knowitallshort}. 
In our case, since $P(q)$ is fixed for a given query, the score of a
candidate can be estimated as the ratio $P(q,r)/P(r)$.
For a given query and candidate,
the MI measures the conditional probability that the candidate appears 
within the query template given that the candidate appears in a document.

\medskip
\noindent {\bf Number of Matched Patterns (NPatterns)}
Another simple ranking is to count for each candidate, the number of 
different patterns (including the query and its rewritings) that would 
extract the candidate.
Because of the semantic relationship between a query and its rewritings,
if a candidate is retrieved by multiple patterns, then there is probably
a good indication that it is indeed a good match.

\medskip
\noindent {\bf Discussion of Ranking Heuristics}
Our experiments comparing these heuristics (as reported in Sec.~\ref{sec:exp}) 
show mixed results for different queries.
A general drawback for NPages and NPatterns is that all query patterns are 
treated equally important.
With NPages, some correct instances that are not frequent cannot 
be well-separated from false positives.
Also under NPages, the scores would be inflated when there are
duplicates (such as those on the Web).
A drawback for MI is that a candidate may not appear with the query but 
it may appear with one of its rewritings.
Selecting a single pattern is not guaranteed to achieve a high recall.
Also it is not clear how MI can be extended to account for multiple rewritings  
of a query and also queries with multiple extractors. 

\subsection{Relationship Graph between Patterns and Tuples}
\label{sec:mutual-ranking}

Let $S_P$ denote the set of query patterns (i.e. the user-specified query
and all of its rewritings) and $S_T$ denote the set of matching candidate tuples.
Consider the bipartite graph $G$ formed between $S_P$ and $S_T$ with
an edge from $p \in S_P$ to $t \in S_T$ if $t$ matches $p$ in some text. 
In some settings, the edges of the graph may be assigned weights to 
indicate the degree of a match.
We define a ranking $F$ as a function that maps $S_T$ to an 
$n$-dimensional vector where $n$ is the size of $S_T$.
Some of our earlier heuristics can be seen as special cases of this ranking function.
In particular, $NPatterns(t) = indegree(t)$ and 
\[
NPages(t) = \sum_{p \in S_P} w(p \rightarrow t)
\]
where $w(p \rightarrow t)$ is the number of pages that give rise to a match
between $t$ and $p$.

A limitation of NPatterns and NPages is that all patterns have the same
influence on the scores. Our observations, however, indicate that often there 
are a few patterns that retrieve many good quality rows while the rest 
retrieve many false positives.
To remedy the problem, 
we propose weighting the patterns and propagating the weights 
to the matching candidates. The weights may be assigned at the same time 
rewritings are compiled if they are available and are not expected to change 
much. 
For generic rewritings, however, the weights can change from one query 
to next and a more dynamic weighting scheme may be preferred. 

A hypothesis is that {\it good tuples} and {\it good patterns} 
exhibit a mutually reinforcing relationship: a good tuple is extracted by 
many good patterns; a good pattern extracts many good tuples
\footnote{The same hypothesis is made in a hyperlinked environment where several
ranking algorithms are proposed for finding authoritative Web pages.}.
For example, 
if {\it Canada} is indeed a good match for the query ``\% is a country'', 
it should be extracted by many good related patterns, such as 
``countries including \%'', 
``such countries as \%'', and so on. Similarly, if ``countries such as \%'' 
is indeed a good pattern for extracting country names, it should extract 
many good instances, like {\it the U.S.}, {\it Canada}, {\it China}, etc. 
Our experiments use an adaptation of 
Kleineberg's HITS algorithm \cite{kleinberg99authoritative},
but there are indications that other two-level influence propagation
algorithms can equally  be used.

\medskip
\noindent
{\bf Algorithm PT-hits} 
Let's associate weight $w_T(t)$ to each candidate tuple $t$, and weight $w_P(p)$ 
to each pattern $p$. In an iterative and alternating fashion, the weights
can be updated as follows:
\begin{equation}
w_T(t) = \sum_{\{p|p\,\,{\rm extracts}\,t\}}w_P(p)
\label{eq:T}
\end{equation}
\begin{equation}
w_P(p) = \sum_{\{t|t\,{\rm is\,extracted\,by}\,p\}}w_T(t)
\label{eq:P}
\end{equation}
The weights can initially be all set to 1, then propagated from patterns to
tuples and vice versa. After each iteration, the weights are normalized
to keep them within a bound and also to check the convergence.
It is easy to show that this iteration converges. 

\subsection{Analyses of the Rankings}
\label{sec:ranking-analysis}

Some of the scoring functions presented in Section~\ref{sec:mutual-ranking} 
have desirable properties that are useful in the setting of the Web.
One such property is {\it monotonicity}, i.e. for every pair of 
tuples $t_1$ and $t_2$, if every pattern $p$ that extracts
$t_1$ also extracts $t_2$ and $w(p \rightarrow t_1) \geq w(p \rightarrow t_2)$,
then $w_T(t_1) \geq w_T(t2)$. It is easy to show that the functions 
NPages, NPatterns and PT-hits are all monotone.

Two other interesting properties are {\it stability} and {\it locality}.
A scoring function is stable if small changes in the graph structure between 
patterns and tuples have no effect on the scores of the tuples, and a scoring function
is local if removing an edge has no effect on the weights of non-adjacent tuples.
With data on the Web regularly changing, it is desirable to know if the scores would
change with small changes on the Web.
More formally, let ${\cal G}_{m,n}$ be a collection of bipartite graphs over a set 
of $m$ patterns and $n$ tuples, $F(G)$ denotes the weight vector of
tuples after a scoring function $F$ is applied to a bipartite graph $G$,
and $d()$ be a function measuring the distance between two weight vectors.

\begin{definition}
{\rm A scoring function {\it F} is {\it stable} on ${\cal G}_{m,n}$ under
distance function $d$ if
for every fixed $k$, we have}
\[
\lim_{m\ or\ n \rightarrow \infty} \max_{G \in {\cal G}_{m,n}, 
e_1,\ldots e_k \in E(G)} d(F(G), F(G\backslash \{e_1,\ldots, e_k\})) = 0
\]
{\rm
where $E(G)$ denotes the edges of $G$ and $G\backslash \{e_1,\ldots, e_k\}$ is the
graph obtained after removing edges ${e_1,\ldots, e_k}$ of $G$.
}
\end{definition}

\begin{definition}
{\rm A scoring function {\it F} is {\it local} if for every graph
$G \in {\cal G}_{m,n}$ and every edge $e \in E(G)$, if
$w_1 = F(G)$ and $w_2 = F(G\backslash {e})$, then for all tuples $i$ and $j$ 
which are not adjacent to $e$, the ordering between $i$ and $j$ under $w_1$ and
$w_2$ remains unchanged.
}
\end{definition}

\begin{theorem}
\label{thm:npatt-local}
The scoring function NPatterns is stable under both Kendall tau distance and Manhattan distance
between the weight vectors.
\end{theorem}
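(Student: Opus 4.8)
The starting point is the explicit form $NPatterns(t) = indegree(t)$, so $F(G)$ is simply the (normalized) vector of indegrees. The plan is to exploit the extreme locality of this function: deleting a single edge $p \rightarrow t$ changes only the one coordinate $t$, lowering its indegree by one, and leaves every other coordinate untouched. Hence deleting $e_1,\ldots,e_k$ alters at most $k$ coordinates of the score vector while the scores of the remaining $\geq n-k$ tuples are exactly preserved. Writing $G' = G \setminus \{e_1,\ldots,e_k\}$, the whole argument then reduces to showing that a perturbation confined to $k$ of the $n$ coordinates becomes invisible to each distance as the graph grows.

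First I would treat the Kendall tau case, which is the clean one. Two rankings that agree on all but a set $A$ of at most $k$ tuples can disagree on the relative order of a pair $\{i,j\}$ only if $\{i,j\} \cap A \neq \emptyset$, and there are at most $k(n-1)$ such pairs. Taking the normalized Kendall tau distance (discordant pairs divided by $\binom{n}{2}$) gives $d(F(G),F(G')) \leq k(n-1)/\binom{n}{2} = 2k/n$, which tends to $0$ as $n \rightarrow \infty$ for every fixed $k$ — exactly the required stability. The one wrinkle is ties, since many tuples share an indegree; I would adopt the tie-aware version of Kendall tau and note that a pair can change status only if it meets $A$, so the same count survives.

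For the Manhattan case I would normalize the indegree vector by its $L_1$ norm $E = |E(G)| = \sum_t indegree(t)$, consistent with the weight normalization used elsewhere in the paper. Writing $d_t$ for $indegree(t)$ and $\Delta_t \geq 0$ for its drop, so that $\sum_t \Delta_t = k$ and $E' = E-k$, a direct estimate that separates the contribution of the $\leq k$ perturbed coordinates from the rescaling of the rest yields
\[
\|F(G) - F(G')\|_1 = \sum_t \left| \frac{d_t}{E} - \frac{d_t-\Delta_t}{E-k} \right| \;\leq\; \frac{2k}{E-k}.
\]
Since every candidate tuple matches at least one pattern, each indegree is $\geq 1$ and therefore $E \geq n$; for fixed $k$ the bound $2k/(E-k) \leq 2k/(n-k) \rightarrow 0$ as $n \rightarrow \infty$, establishing stability.

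The main obstacle is precisely the Manhattan case, and specifically the normalization. Without it, the raw $L_1$ change is exactly $k$ — a constant that never vanishes — so the statement is false for unnormalized scores and genuinely needs the normalized weight vector. Tied scores also make the rank-based reading of ``Manhattan distance'' (the Spearman footrule) an attractive alternative object: there one can bound the footrule by twice the number of discordant pairs (the Diaconis--Graham inequality) and inherit the Kendall tau estimate verbatim, giving a unified argument for both distances. I would therefore flag the growth $E \rightarrow \infty$ — equivalently, every tuple having positive indegree — as the single hypothesis the argument truly requires, and I would verify that this is the intended reading of $\mathcal{G}_{m,n}$; the direction $m \rightarrow \infty$ with $n$ fixed needs the same scrutiny, since it is the number of tuples, not patterns, that drives the normalization to zero.
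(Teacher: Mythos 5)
Your proof is correct, and your Kendall tau argument is essentially the paper's: both bound the number of pairs that can become discordant by (number of affected tuples) $\times\, n$, divide by the $\Theta(n^2)$ normalization, and get a bound of order $k/n$ (the paper gets $2k/(n-1)$, you get $2k/n$). Where you genuinely diverge is the Manhattan case, and the divergence comes from guessing a different normalization convention. The paper defines the normalized Manhattan distance as $d_1(w_1,w_2) = \frac{1}{n}\sum_i |w_1(i)-w_2(i)|$, i.e.\ the \emph{distance} carries the $1/n$ factor while the score vector stays as raw indegrees; since deleting $k$ edges changes the raw scores by a total of at most $k$, the paper's bound is simply $k/n \rightarrow 0$, with no further hypotheses. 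You instead keep the plain $L_1$ distance and normalize the \emph{vector} by $|E(G)|$, which forces you to introduce the extra assumption that every tuple has indegree at least one (so $|E(G)| \geq n$) to push your bound $2k/(|E(G)|-k)$ to zero. So your claim that ``the statement is false for unnormalized scores'' dissolves under the paper's definitions --- the constant raw change of $k$ is harmless because the $1/n$ lives in the distance, not the vector; your route buys nothing here except an unneeded hypothesis, though it would be the right fix if the distance really were unnormalized. Separately, your closing remark about the $m \rightarrow \infty$, $n$ fixed direction is a legitimate catch that applies equally to the paper's own proof: the definition of stability takes the limit as $m$ \emph{or} $n$ tends to infinity, but both your bound and the paper's vanish only as $n \rightarrow \infty$, so neither argument addresses the fixed-$n$ direction.
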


\begin{theorem}
\label{thm:npages-local}
NPages is stable under the Kendall tau distance. NPages is also stable under the Manhattan 
distance if $w(p \rightarrow t)$ bound to a fixed constant.
\end{theorem}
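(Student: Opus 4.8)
The plan is to exploit the \emph{locality} of NPages: since $NPages(t)=\sum_{p\in S_P} w(p\rightarrow t)$ is a sum over the edges incident to $t$ alone, deleting an edge $e=(p,t)$ changes only the coordinate of $t$, lowering it by exactly $w(e)$, and leaves every other coordinate of the weight vector untouched. Consequently, deleting $k$ edges $e_1,\ldots,e_k$ perturbs the scores of at most $k$ distinct tuples. This single structural fact drives both halves of the theorem, and it also explains why the two distances require different hypotheses: the Kendall tau distance sees only \emph{how many} orderings flip, which is governed by the number of perturbed tuples, whereas the Manhattan distance sees the \emph{magnitude} of the perturbation, which is governed by the weights of the deleted edges.

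For the Kendall tau bound I would argue as follows. A pair $(i,j)$ of tuples can become discordant only if at least one of $i,j$ had its score changed, i.e. only if $\{i,j\}$ meets the set of at most $k$ perturbed tuples. The number of such pairs is at most $k(n-1)$, so the normalized Kendall tau distance is at most $k(n-1)/\binom{n}{2}=O(k/n)$. Since this bound is uniform over all graphs in ${\cal G}_{m,n}$ and over all choices of the $k$ edges, the maximum tends to $0$ as $n\rightarrow\infty$ for every fixed $k$, with no assumption whatsoever on the edge weights. This matches the unconditional statement for the Kendall tau case.

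For the Manhattan distance I would first compute the raw $L_1$ change. Because only the $k$ touched tuples move, and each moves by the total weight of the deleted edges incident to it, the $L_1$ distance between the two score vectors equals $\sum_{j=1}^{k} w(e_j)$. Under the hypothesis $w(p\rightarrow t)\le C$ for a fixed constant $C$ this is bounded by $kC$, a constant independent of the size of the graph; after normalizing the score vectors by their total weight $W$, which grows without bound as the graph grows, the normalized distance is $O(kC/W)\rightarrow 0$. Conversely, if the weights were unbounded, a single deleted edge of arbitrarily large weight would produce an arbitrarily large (and, after normalization, non-vanishing) change, so the bounded-weight hypothesis is exactly what is needed; this is why the Manhattan statement is conditional while the Kendall tau statement is not.

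The step I expect to be the main obstacle is making the worst-case limit rigorous, i.e. controlling $\max_{G}$ rather than a single family of graphs. For Kendall tau the counting bound $O(k/n)$ is already uniform, so the only care needed is the exact pair count when several deleted edges share a tuple. For the Manhattan case the delicate point is the normalizer $W$: I must argue that the total weight is forced to grow with the graph (or restrict ${\cal G}_{m,n}$ to that regime), since on adversarially sparse graphs the raw bound $kC$ need not be small relative to $W$. Pinning down this normalization, and confirming that NPatterns is simply the unit-weight special case ($C=1$) consistent with Theorem~\ref{thm:npatt-local}, is where the argument must be stated with the most care.
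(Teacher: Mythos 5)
Your Kendall tau half is essentially the paper's own proof: each deleted edge perturbs at most one tuple's score, a pair can become discordant only if it meets the set of (at most $k$) perturbed tuples, and the resulting bound of order $k/n$ is uniform over ${\cal G}_{m,n}$ and over the choice of deleted edges, so it vanishes as $n \rightarrow \infty$ with no hypothesis on the weights. Your raw Manhattan computation also matches the paper: the $L_1$ change of the score vector is exactly the total weight of the deleted edges, hence at most $kC$ when $w(p \rightarrow t) \leq C$.

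The gap is in how you close the Manhattan argument. You rescale the score vectors by the total graph weight $W$ and are then forced to argue that $W$ grows without bound, an assumption you correctly note fails on adversarially sparse graphs; this is the ``delicate point'' you flag. But that obstacle is entirely an artifact of your chosen normalization, not a feature of the theorem. The paper's normalized Manhattan distance is $d_1(w_1,w_2) = \frac{1}{n}\sum_{i=1}^{n} |w_1(i)-w_2(i)|$: the distance is divided by the number of tuples $n$, and the score vectors themselves are never rescaled. With that definition, your bound of $kC$ on the raw change finishes the proof in one line, $d_1 \leq kC/n \rightarrow 0$ as $n \rightarrow \infty$, uniformly over graphs and edge choices, with no claim about $W$ needed --- which is precisely the paper's argument (``the sum of the changes in scores for affected nodes cannot exceed $ck$, and the rest follows''). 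Your remaining observations survive intact under this reading: the bounded-weight hypothesis is still exactly what is required, since the maximum over $G \in {\cal G}_{m,n}$ sits inside the limit and an edge whose weight grows like $n^2$ would keep $\max_G d_1$ bounded away from zero; and NPatterns is indeed the $C=1$ special case, consistent with the paper's Theorem~\ref{thm:npatt-local}, whose proof bounds the same sum of score changes by $k$.
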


\noindent The proofs for Theorems~\ref{thm:npatt-local} and \ref{thm:npages-local} 
can be found in the appendix.

\begin{theorem}
The scoring functions NPatterns and NPages are both local.
\end{theorem}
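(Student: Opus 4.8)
The plan is to exploit the fact that both $NPatterns$ and $NPages$ assign to each tuple a score that depends \emph{only} on the edges incident to that tuple, so that deleting a single edge can perturb the score of at most one tuple. Since the graph $G$ is bipartite between the pattern set $S_P$ and the tuple set $S_T$, any edge $e$ has exactly one endpoint in $S_T$; write $e = (p, t_0)$ with $p \in S_P$ and $t_0 \in S_T$. Then the unique tuple adjacent to $e$ is $t_0$, so the tuples that the definition of locality asks us to control are precisely those $i, j \in S_T$ with $i \neq t_0$ and $j \neq t_0$.

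First I would record the two score formulas as functions of the incident (``star'') edges around a tuple: $NPatterns(t) = indegree(t)$ counts the edges entering $t$, while $NPages(t) = \sum_{p \in S_P} w(p \rightarrow t)$ sums the weights of the edges entering $t$. In both cases the value is determined entirely by the edges having $t$ as their tuple endpoint, with no dependence on edges incident to other tuples.

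Next I would argue that deleting $e = (p, t_0)$ affects only the star around $t_0$. For $NPatterns$, removing $e$ decreases $indegree(t_0)$ by one and leaves $indegree(i)$ unchanged for every $i \neq t_0$. For $NPages$, removing $e$ subtracts the single term $w(p \rightarrow t_0)$ from $NPages(t_0)$ and leaves $NPages(i)$ unchanged for every $i \neq t_0$, since the per-edge weights of edges incident to other tuples are intrinsic to those edges and are not reassigned by the deletion. Writing $w_1 = F(G)$ and $w_2 = F(G \backslash \{e\})$, this gives $w_1(i) = w_2(i)$ for all $i \neq t_0$ under either scoring function.

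Finally, for any pair $i, j$ not adjacent to $e$ we have both $w_1(i) = w_2(i)$ and $w_1(j) = w_2(j)$, so the ordering between $i$ and $j$ is trivially preserved, establishing locality for both functions. The only points needing care are the bookkeeping in the adjacency definition (confirming that $t_0$ is the unique tuple endpoint of $e$) and the observation that $NPages$ weights attach to individual edges, so removing one edge cannot alter the contributions defining any other tuple's score. I do not expect a genuine obstacle here: unlike stability, which required a limiting argument over ${\cal G}_{m,n}$, locality holds essentially by construction because both scores are defined locally at each tuple.
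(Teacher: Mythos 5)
Your proposal is correct and follows exactly the paper's argument: the paper's one-line proof states that removing an edge $e$ only affects the score of the tuple linked by $e$, which is precisely the observation you elaborate by noting that both $NPatterns$ and $NPages$ are determined solely by the edges incident to each tuple. Your version simply spells out the bookkeeping (the unique tuple endpoint $t_0$, the unchanged scores for $i \neq t_0$, and the trivially preserved ordering) that the paper leaves implicit.
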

\begin{proof}
The locality follows from the fact that removing an edge
$e$ only affects the score of a tuple that is linked by $e$.
\end{proof}
Based on the results reported for HITS \cite{BRRT05},
PT-hits is neither stable nor local.

\section{Experiments}
\label{sec:exp}

To evaluate our querying framework and ranking algorithms, we 
built a search tool, called {\it DeWild}, which relies on the Web as its 
source of data
\footnote{{\it DeWild} stands for Data Extraction using Wild cards. The system is available online at
\url{dewild.cs.ualberta.ca}.}.
Using the Web compared to a closed collection has both benefits and drawbacks.
A benefit is that
its information redundancy can compensate for the relatively 
small size and coverage of our rewriting rule set and the lightweight 
NLP techniques used. 
A drawback is that the text collection often is not clean and there are many 
bogus tuples that need to be filtered.

DeWild takes advantage of existing commercial search engines
and currently queries Google 
via its search API.
In our experiments, 200 snippets are downloaded for each extraction pattern
(our online system only downloads 30 snippets per pattern and also lists the set of 
extraction patterns that are tried).  If there 
are fewer than 200 snippets found for a pattern, then all available 
snippets are downloaded\footnote{Our online demo downloads at most 30
snippets for each query and each rewriting to keep the response time short.}.
The snippets returned by 
a search engine typically consists of the search query and its surrounding text.
Since the target data appears immediately before or after the user 
query, they can be often extracted using the snippets only (without 
downloading the actual pages), hence network and processing costs 
are significantly reduced. Though it should be noted that a snippet may lack 
sentence boundaries, and this can reduce POS tagging accuracy.
For our experimental comparisons, 
we implement all heuristics discussed in Section~\ref{sec:ranking}.
However, if it is not explicitly stated otherwise,
DeWild uses PT-hits as its native ranking.

A publicly available POS tagger called 
NLProcessor\footnote{\url{www.infogistics.com/textanalysis.html}} is used 
to identify the part of speech from retrieved text, so that only noun 
phrases are 
extracted for \% wild cards. For * wilds cards, our system uses a collection of related words 
automatically compiled \cite{lin97using} from Wall Street Journal corpus, 
but it can equally use other collections as well.
Next we report our experiments with DeWild. 

\subsection{Effectiveness of the Queries}
To measure the effectiveness of the proposed querying framework in expressing different 
extraction tasks, we used the tool to express and answer natural language questions seeking 
short answers. We randomly selected 100 natural language questions from the AOL query 
log~\cite{aol06}, with 
the selection criteria that the queries were all started with one of the wh-words
{\it what, which, who} and {\it where} and were followed by {\it is, was, was} or {\it were}.
The break-down of the queries were as follows: what-84, who-6, where-7 and which-3.
The {\it why} queries were excluded since these queries often seek
long answers and cannot be expressed in our framework.
Even among the selected queries, not all queries were expected to have short noun phrases 
as answers (e.g. What is a hedge fund? Or, what are the causes of poverty?);
in fact, 58\% of the queries were in this category, and they could not be properly expressed 
as queries. Another 7\% of the queries were ill-formed search queries in the form of
questions which we didn't expect them to have answers (e.g. what is my account balance?).
We removed both classes of queries before further considerations. We further removed another 2\%
of the queries where we couldn't find an answer after an extensive search on the Web, and we
were not sure if the questions had an answer at all. 
The remaining 33\% of the queries were all expressed as queries in our framwork and were 
evaluated using our online tool. 

As a baseline for comparison, we also passed the queries as they were expressed to OpenEphyra~\footnote{http://www.ephyra.info},
a full-fledged open-source question answering system. Similar to ours, OpenEphyra uses
the Web, and in particular the Google search engine, to evaluate the questions, hence both
DeWild and OpenEphyra had access to the same data collection. However, OpenEphyra does much
more extensive parsing of the questions and the answers, and has components for detecting
answer types (of questions) and named entity classes in text.

Top 5 results of each query, as returned by each comparison system, were given 
to two annotators who were asked to check of the set included a correct answer to 
the original question. Table~\ref{table:dewild-openephyra}
shows the precision, averaged over two annotations per question, for both systems. 
DeWild achieves a much better precision, with a good annotator agreement (kappa) of 0.66,
for both short and a bit longer questions, 
retrieving a correct answer for twice as many queries as OpenEphyra. 
\begin{table}
\begin{tabular}{|l|c|c|c|}\hline
           & \multicolumn{3}{|c|}{Question length}\\
           & $< 6$ terms & $\geq 6$ terms & all\\ \hline
DeWild        & 0.88        & 0.41           & 0.68\\
OpenEphyra & 0.41        & 0.22           & 0.32 \\ \hline
\end{tabular}
\caption{Precision of the results for natural language questions}
\label{table:dewild-openephyra}
\end{table}

\subsection{Recall and Precision}

In general, it is difficult to measure recall on the Web
since we often do not know the full answer set. The answer set may not be
all on the Web, or it can be scattered in many pages of which some
may not be crawled or indexed by a search engine.
To measure both recall and precision under these constraints, 
we decided to extract instances of some known classes. 

\begin{table}[htb]
\centering 
\begin{tabular}{|l|l|} \hline
Pattern & Weight \\
\hline \hline
US states, including \%      &  0.2514 \\
US states such as \%         &  0.1826 \\
\% and other US states       &  0.1766 \\
\% is a US state             &  0.0992 \\ 
such US states as \%         &  0.0962 \\
US states, especially \%     &  0.0789 \\
\% or other US states        &  0.0658 \\
\% is the US state 	     &  0.0440 \\
US states \%                 &  0.0052 \\
\%, the US state             &  0 \\ 
US state \%                  &  0 \\ 
\%, a US state               &  0 \\ \hline
\end{tabular} 
\caption{Matching patterns for the target US state names,
with the weights computed by PT-hits}
\label{tab:countries-states}
\end{table} 

In one experiment, we used the names of 50 US states as the
ground truth and tried to retrieve and rank the same data using our
heuristics. 
The query was formulated as ``US states such as \%'';
Table~\ref{tab:countries-states} shows the extraction patterns which matched
our initial query, after instantiating ``US states'' in our generic patterns,
as well as their weights computed by PT-hits.
Clearly, any of the patterns
in the table could have been used as a query and the result returned by DeWild
would have been the same. 
In our evaluation, a retrieved candidate was treated
``correct'' if it was either a full state name or an abbreviation.
Figures~\ref{fig:precision}(a) and (b) show precision and recall 
for our heuristic (PT-hits, NPages, NPatterns) as well as MI and Google 
for comparison. For Google, we tried the queries both with star in
place of the wild card and without, since Google treats star as a special 
character and highlights the matching terms. 
To do a ranking using mutual information (MI), we could use either
the query or one of its rewritings. Since it is not clear which rewriting 
performs the best, we ran the algorithm three times with the discriminative
phrases ``US state of X'', ``US states such as X'', and ``X is a US state''.
For clarity of the figure, we only plotted the one performing the best for 
all recall values. KnowItAll~\cite{knowitallshort} was initially using MI for 
its ranking, but it's configuration was changed over time, hence we ran
our search on the online system
\footnote{\url{www.cs.washington.edu/research/knowitall}} as well. 
The results for KnowItAll must be treated with a grain of salt though, 
since it was the only system using its own local collection.

Many of our comparison systems show a precision of 1 (or very close to 1) 
for small recalls, meaning that they all retrieve a couple of correct state 
names on top. The precision generally drops as more state names are discovered.
PT-hits performs the best for this particular query, retrieving more than 80\%
of the state names without a single error.

\begin{figure*}[t]
\centering
$\begin{array}{cc}
\includegraphics[width=0.47\textwidth]{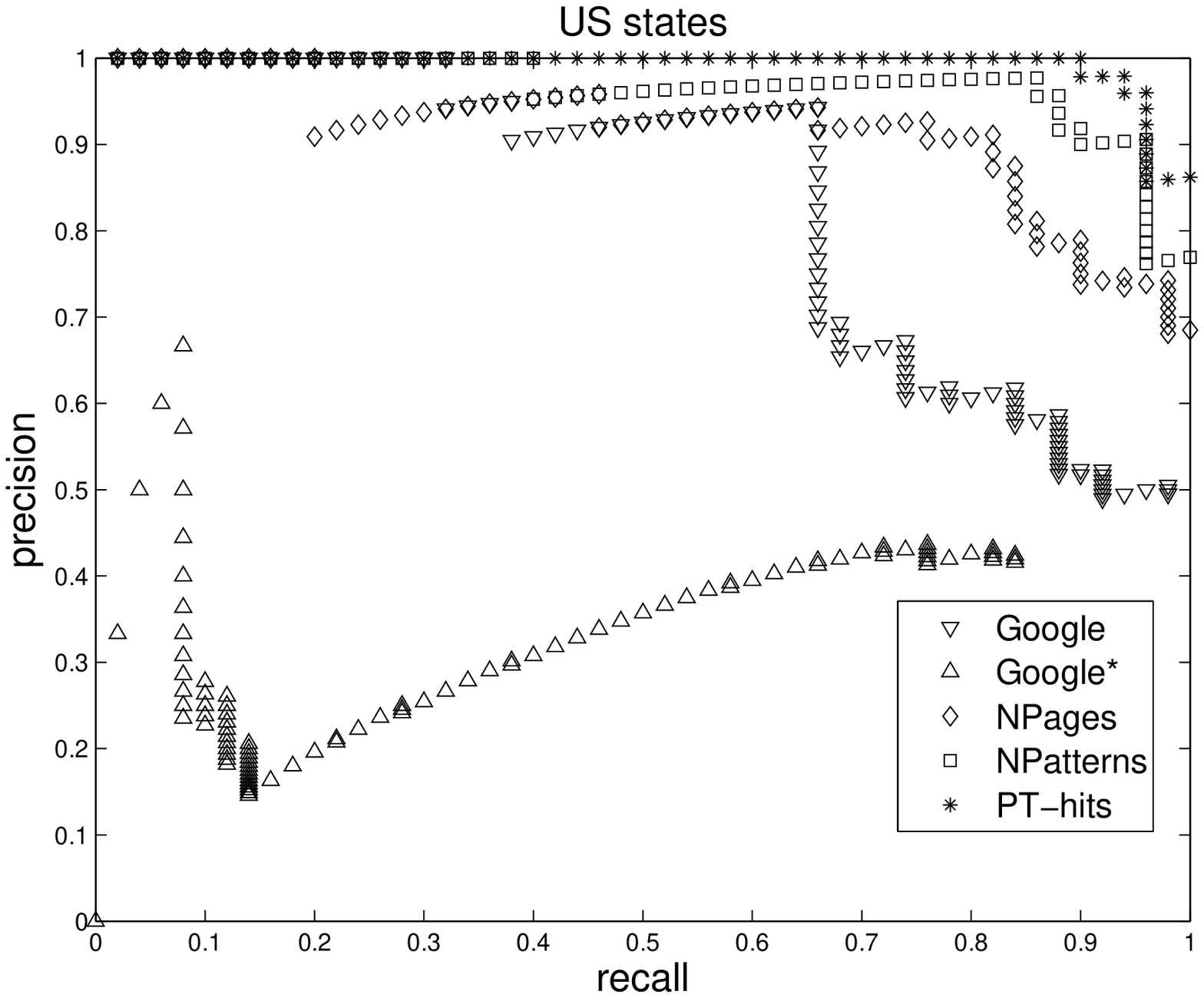}
\hspace{0.01\textwidth} &\includegraphics[width=0.49\textwidth]{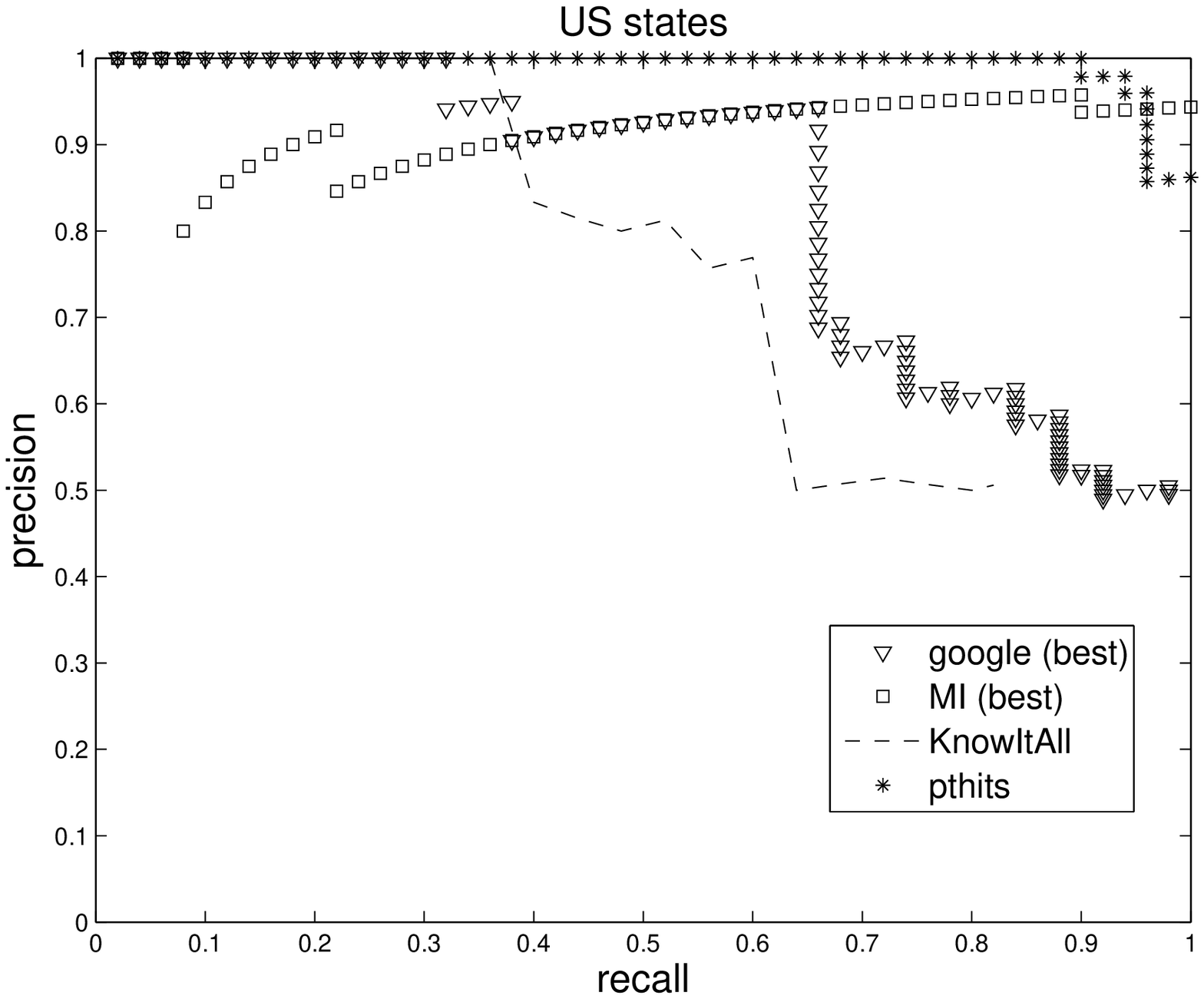}\\ 
(a) \hspace{.3in} & (b)\\ 
\includegraphics[width=0.47\textwidth]{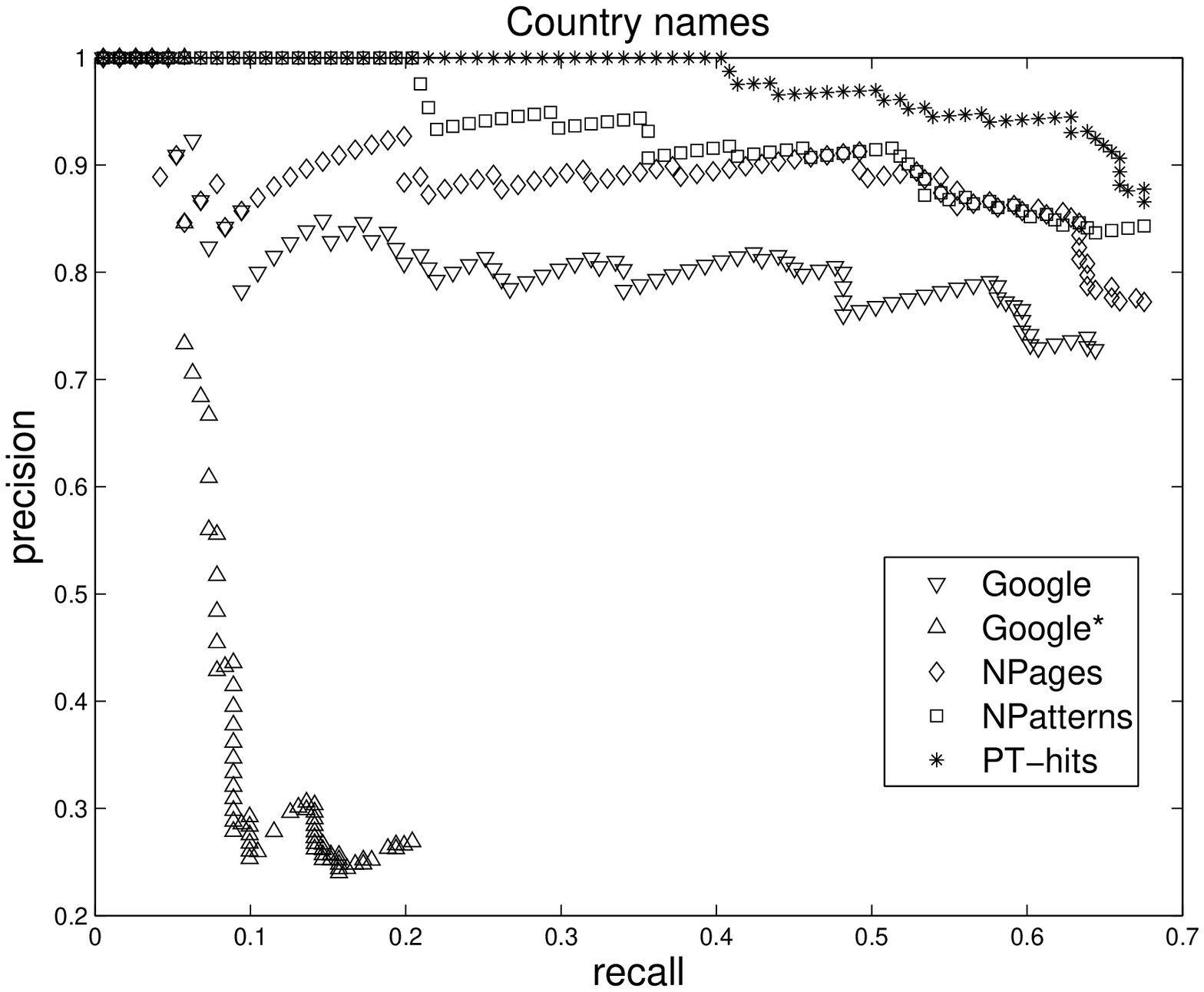} 
\hspace{.01\textwidth} &
\includegraphics[width=0.47\textwidth]{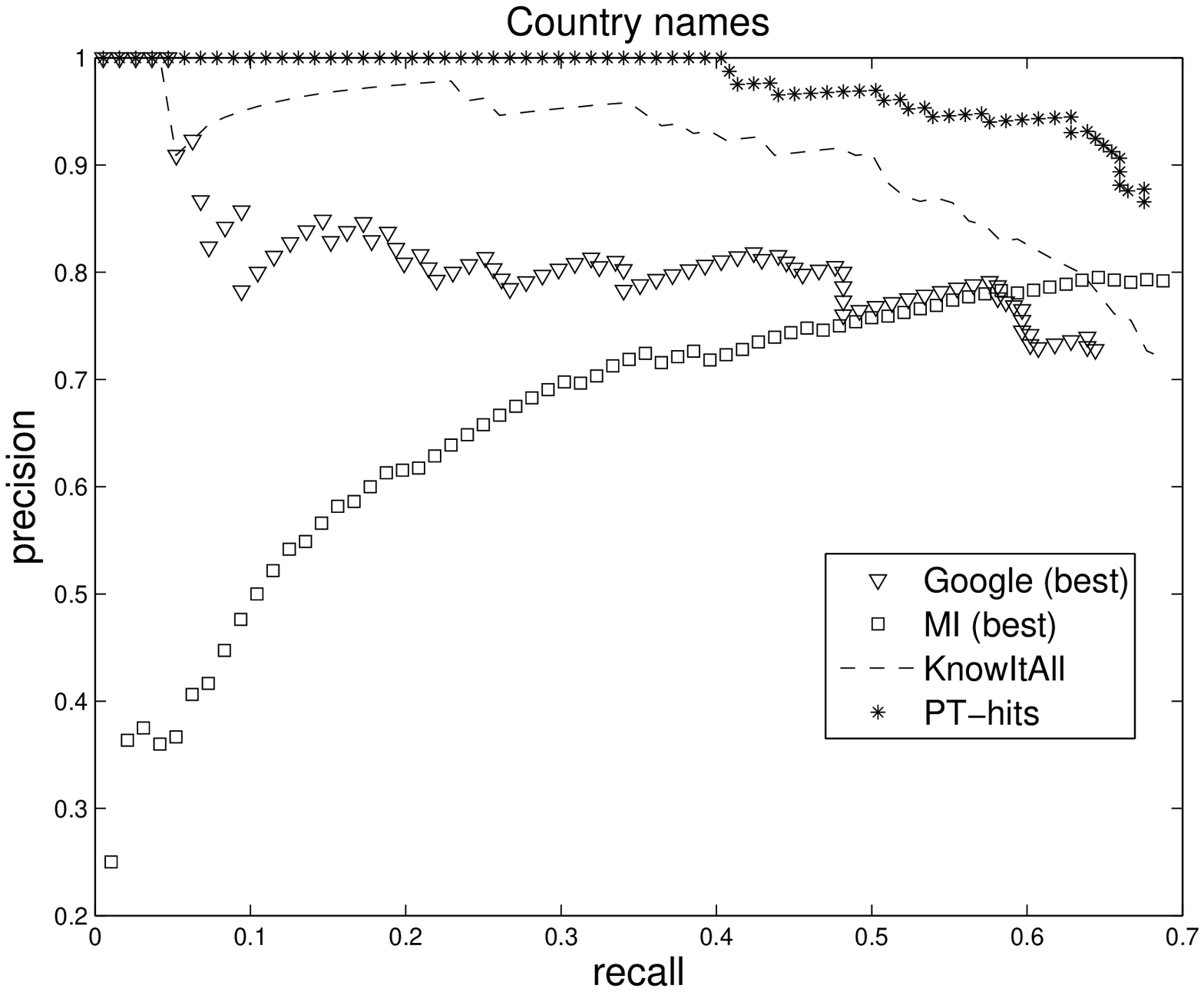}\\
(c) \hspace{.3in} & (d)\\
\end{array} $
\caption{Precision and recall with the extraction target set to the US states
in (a) and (b) and country names in (c) and (d)}
\label{fig:precision}
\end{figure*}

Figures~\ref{fig:precision} (c) and (d) show the comparison between the
same systems with the extraction target set to the country names, 
compiled by the US State
Department\footnote{\url{www.state.gov/www/regions/independent_states.html}},
as the ground truth. The query ``countries such as \%'' was used for the
extraction in DeWild. Many of the systems except MI perform very well, 
with PT-hits again showing the best performance.
This could be partly because of the
high quality of search engine results for this particular query.
MI performs the least favorable for low recalls, and this could be
due to the inaccuracy of the search engine results to our count queries.
 
\subsection{Number of Rewritings}
Adding each query rewriting introduces some cost at the query processing 
time, and a question is if this additional cost is justified.
To evaluate the effect of the number of rewritings on the precision,
we used PT-hits to compile a list of ``US states'' but varied the number
of rewritings that were used. We chose random sets of 2, 3, and 4 patterns
from Table~\ref{tab:countries-states},
and ran PT-hits each time with only one of these sets.
Each experiment was repeated three times, each with a different random
pattern selection. Figure~\ref{fig:varyingNumOfPatterns} shows the
average precision at each position.
The average precision improves when the number of patterns increases 
from 2 to 3. The precision further improves significantly when 
the number of patterns is increased from 3 to 4. Recall can either 
stay the same or increase using more rewritings but can never deteriorate.
\begin{figure}[htbp]
\centering
\includegraphics[width=0.46\textwidth]{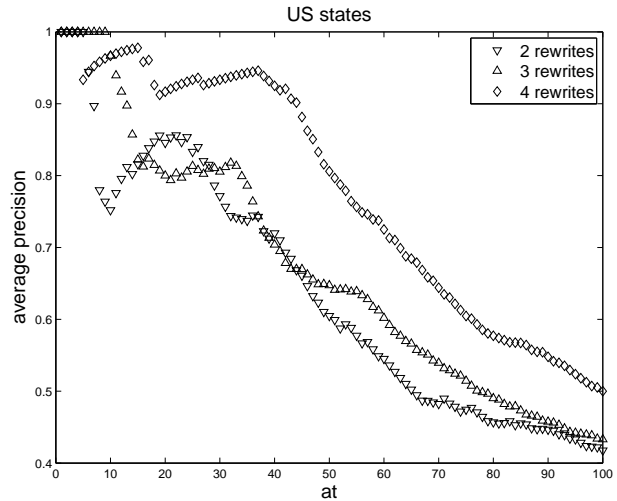}
\caption{Average precision at each position for PT-hits, varying the number 
of patterns for target US states}
\label{fig:varyingNumOfPatterns}
\end{figure}
\subsection{TREC Questions}
Finding candidate answers for natural language questions in a closed corpus can be
challenging, especially if answers are phrased differently than questions, and this
is usually the case in the question answering (QA) track of TREC.
A solution adopted by some QA systems (e.g. AskMSR~\cite{bdb02}) is to find question answers
in a larger collection such as the Web and project the answers to a TREC 
corpus to verify their correctness. 
If a question is formulated as a DeWild query, we can use 
our approach to locate the answer from the Web. 

For our evaluation, we took the first five QA targets from the TREC 2004
dataset~\cite{voorhees04overview}; since a QA target consisted of multiple 
questions, we ended up with a total of 22 questions in the experiment.
For each question, we report the number of correct answers given by TREC,
the rank of the first TREC answer under DeWild, and the number of overlaps 
between TREC and DeWild.
The result of the evaluation is presented in Table~\ref{tab:trec22q}. 

For 64\% of the questions (i.e. 14 questions), all answers returned by TREC 
were also returned by DeWild in positions 1 to 3.
For 18\% of the questions, either we didn't have an obvious formulation of the questions
or our formulation did not retrieve any matching candidate to TREC.
These questions are marked with ``na'' in the table. 
We found out that the TREC answers for question 1.3 were not the ground truth
on the Web; therefore there was a small overlap between TREC and DeWild.
For question 5.4, which asked for the CEO of AARP, TREC had ``Horace Deets'' 
or ``Tess Canja'' as the correct answer; this was based on the information 
in year 2004. At the time
of running our experiments, the correct answer was ``Bill Novelli'', 
and DeWild extracted the more up-to-date correct answer.

\begin{table}[htbp]
\centering
\begin{tabular}{|r|r|r|r|} \hline
Question & TREC & \multicolumn{2}{c|}{DeWild results}   \\
         & ans. & rank(1st overlap) & \# of overlaps \\ 
\hline \hline
1.1 & 1 & 2 & 1 \\ 
1.2 & 1 & 2 & 1 \\ 
1.3 & 14& 3 & 5 \\
1.4 & 1 & na & na \\
1.5 & 1 & 1 & 1 \\
2.1 & 1 & 1 & 1 \\
2.2 & 1 & 3 & 1 \\
2.3 & 5 & 1 & 3 \\
2.4 & 1 & 3 & 1 \\ 
3.1 & 1 & 2 & 1  \\
3.2 & 1 & na & na \\
3.3 & 1 & na & na \\
4.1 & 1 & 1 & 1 \\
4.2 & 1 & 1 & 1 \\
4.3 & 1 & 1 & 1 \\
4.4 & 4 & 2 & 3 \\ 
4.5 & 1 & 1 & 1 \\
5.1 & 1 & 1 & 1 \\ 
5.2 & 1 & 1 & 1 \\ 
5.3 & 1 & 2 & 1 \\
5.4 & 1 & na & 0 \\ 
5.5 & 6 & 9 & 2 \\ 
\hline\end{tabular}
\caption{DeWild's handling of the first 22 questions from TREC 2004}
\label{tab:trec22q}
\end{table}

DeWild sometimes returned additional instances of which some were correct
and others were incorrect but appeared with the query and gave additional 
information. For instance, consider the question ``Who discovered prions?''
from TREC which has only one correct answer.  We transformed the question 
to ``prions are discovered by \%'' and passed it as a query to DeWild.
The highest ranked instance, ``Stanley Prusiner'', was the correct answer to 
the question, and it also received a substantially larger weight than the 
second best instance. Our system returns other acceptable 
answers, including the 8th-ranked ``Dr. Stanley Prusiner'', the 9th-ranked 
``researcher Stanley Prusiner'', and the 12-th ranked ``Nobel Prize winner 
Stanley Prusiner''.  These other answers show that Stanley Prusiner was a 
doctor, a researcher, a Nobel prize winner and he was from the University of 
San Francisco. 

\subsection{Ad Hoc Data Extractions}
As our last experiment, we tried to compile useful resource lists which
we could not find in a list format anywhere on the Web. 
In one case, we tried to find the names of summer 
movies. Although some online resources maintain a quite complete list of 
movies, they don't classify movies as summer movies or otherwise. 
The pattern ``\% is a summer *blockbuster*'' is used as the query for the task. 
The term {\it blockbuster}, which is enclosed by * wild cards in the query, is
augmented by two extra related terms: {\it movie} and {\it film}. 
We manually evaluated the extracted results using the Internet Movie 
Database (IMDB) and concluded that all the results in the top 10
were indeed correct movie names, and their 
release dates were in the summer.

In one more experiment, we used the query ``\% is a Canadian writer'' to compile
a list of Canadian writers. This time, we put together a set of rewritings that
were specific to the query. The query returned over 1300 names. 
We could verify that 91 of the first 100 rows were real Canadian writers.
Of the first 200 rows that we verified, 156 were real Canadian writers.
We also compared the 
first 200 tuples to two of the most comprehensive online lists of Canadian writers
that we could find. DeWild retrieved 86 real author names which could not be 
found in one list\footnote{\url{www.track0.com/ogwc/authors}}
and 70 names which were not in the other 
list\footnote{\url{www.umanitoba.ca/canlit/authorlist}}. 
After combining the two lists, 
DeWild still reported 58 names which we couldn't find in the combined list.
This experiment shows that our queries can be used to compile a reasonably 
good list of resources which can be further edited for correctness.

\section{Related Work}
\label{sec:related}

There is a large body of work on question answering. 
Many systems use a combination of NLP techniques (deep or shallow), learning
algorithms and hand-crafted rules to classify the questions and to establish
relationships between terms of a question and a possible answer sentence
(e.g. \cite{kwok01scaling,askmsr02}). 
Despite some overlap, there are fundamental differences between our work and
the work on question answering. Our work is not a replacement for QA systems
that can often handle complex questions; it is rather a natural way of
expressing short questions or extraction targets.
It is possible to integrate our work 
within a question answering system if natural language questions can be
mapped to DeWild queries.

Large-scale data extraction from the Web has been the subject of various 
recent work~\cite{Eugene-thesis}. In particular, Brin~\cite{Brin98} 
suggests an algorithm which takes a small number of examples of a class
as a seed set and extracts more examples of the same class.
His algorithm learns a set of extraction patterns for each page (or pages with the same
URL prefix) that contains some of the examples and use those patterns to extract more 
tuples from the same page(s). This algorithm does a good job when data is structured
in a tabular format but is not expected to work on free text. 
This is because it is generally unlikely to find more than one example 
(of the seed set) in a 
text document such that their surrounding texts are the same.

KnowItAll~\cite{knowitallshort} takes the description
of a concept or class (e.g. cities) as input and extracts instances 
(e.g. Paris, New York, \ldots) of the class. The system maintains a set of 
rules which can be instantiated with an input class to produce  
keywords that must be used to extract the instances of the class.
KnowItAll uses co-occurrence statistics, specifically mutual information,
to assess the relatedness of each instance.
Our approach differs from KnowItAll in several important aspects: 
First, the query-based interface and the support of wild cards make 
DeWild more adaptive to different extraction tasks. Second,
unlike KnowItAll 
where a concise description of a class must be given, a DeWild query may
specify only the context in which the instances may appear, and this is useful
when a concise class description is not available (this problem is somewhat
addressed in TextRunner~\cite{textrunner08}).
Last but not least, our approach of porting link-based ranking algorithms
for assessing extraction results from text is novel and 
performs better than the one used in KnowItAll.

Related to our query rewritings is the work on query expansion 
\cite{Billoti04}
and query transformation for question answering \cite{ALG01}.
Query expansion has shown to be difficult for phrase queries.
Also query expansion and transformation techniques are
not directly applicable to queries with wild cards.
Our queries can benefit from inverted indexes on terms, phrases,
and N-grams and there are already some works in these areas (e.g. \cite{CaEt05,FREE}).
Other related work includes the work on Web query languages and wrappers
(see \cite{FLM98} for a survey of this area before 1998).
These works can be used to extract data from a specific site or 
a set of pages with similar structures but are not generally applicable to free text. 
Finally, Google's {\it fill-in-the-blank} is related to our wild cards but is different.
Google returns a ranked list of pages for a fill-in-the-blank search but
the ranking is different (and the detail is not published).
Our evaluation includes Google's fill-in-the-blank (referred to as 
Google*).

\section{Conclusions}
\label{sec:con}

We presented a framework for querying and large-scale data extraction from 
natural language text, and evaluated the effectiveness of our framework
within a few data extraction tasks on the Web. We analyzed our rankings of
the results in terms of the stability and the locality of the scoring functions
and conducted experiments comparing their effectiveness in terms of precision 
and recall. Our querying
interface is intuitive for writing queries and scalable to large 
number of rewritings and with more wild cards.

Our work opens a few interesting directions for future work.
First, it would be interesting to study other wild cards, querying
schemes and formalisms that are simple for writing queries and 
at the same time have a well-defined syntax and semantics for 
query evaluations and optimizations. We consider our wild card
queries as a first step toward more formal querying of natural language text.
Second area for possible future work is data storage and indexing. 
Despite the extensive work on indexing free text, there is not much
work on indexing natural language text in particular. 
Given that natural language text can be parsed, there is much room 
for research on building indexes that are aware of sentence structures
and queries. 
Third, a more formal query syntax and semantics opens the door for more study 
on mapping queries to evaluation plans and access path selection and 
optimization.
These are important issues when querying large repositories and/or posing 
complex queries.
Another area for further study is on
extracting {\it n-ary} relations for $n>3$; the problem in general 
is difficult since the columns of target rows can be scattered in 
multiple sentences. 
Yet one more area is on mapping natural language
questions to more formal queries that can be efficiently evaluated.

\section*{Acknowledgments}
\noindent This work is supported by Natural Sciences and Engineering
Research Council of Canada.

\bibliography{dewild} 
\bibliographystyle{abbrv}
\appendix
We give two definitions before proving our theorems.
\begin{definition}
{\rm The normalized {\it Kendall tau} distance between two $n$-dimensional
real vectors $w_1$ and $w_2$ is defined as}
\[ 
d_{kt}(w_1,w_2) = \frac{2}{n(n-1)} \sum_{i=1}^{n} \sum_{j=1}^{n} I_{w_1,w_2}(i, j)
\]
{\rm where} 
\[ I_{w_1,w_2}(i, j) = 
\left\{\begin{array}{ll}
1 & if\ w_1(i) < w_1(j)\ AND\ w_2(i) > w_2(j)\\
0 & otherwise.  
\end{array} \right.
\]
\end{definition}

The Kendall tau distance measures the number of pairwise disagreements between
two (ranked) lists. The Kendall tau distance changes if there is a change in
ordering, but any change in the actual scores would not affect the Kendall tau
distance if the ordering remains the same.

\begin{definition}
{\rm The normalized {\it Manhattan} distance between two $n$-dimensional
real vectors $w_1$ and $w_2$ is defined as}
\[
d_1 (w_1,w_2) = \frac{1}{n}\sum_{i=1}^{n} |w_1(i) - w_2(i)|.
\]
\end{definition} 
{\em Theorem \ref{thm:npatt-local}.
The scoring function NPatterns is stable under both $d_{kt}$ and $d_1$.}
\begin{proof}
Let $G^\prime = G\backslash \{e_1,\ldots, e_k\}$ and $\{1,\ldots,l\}$
be the set of tuples in $G^\prime$ which are affected by the change. 
Clearly $l \leq k$.

\noindent 
{\it (Case for $d_{kt}$)} 
Consider the ordering of the tuples under NPatterns(G).
With every edge $e_i$ removed from the graph, the ordering can be disturbed
by $n$ substitutions at most. This extreme case arises when all tuples have
the same scores and removing $e_i$ moves a tuple to
the bottom of the list. In total, there are at most $k$ tuples affected.
Therefore the maximum for $d_kt(w_1,w_2) \leq 2 (kn) / n(n-1) = 2k/(n-1)$,
and $\lim_{n\rightarrow \infty} d_{kt}(w_1,w_2) = 0$.

\noindent 
{\it (Case for $d_1$)}
The sum of the changes in scores for affected nodes cannot exceed $k$
whereas this sum for non-affected nodes is 0. 
Thus $\lim_{n \rightarrow \infty} d_1(w_1,w_2) = 
\lim_{n \rightarrow \infty} k/n = 0$. 
\end{proof}
{\em Theorem \ref{thm:npages-local}.
NPages is stable under $d_{kt}$. NPages is also stable under $d_1$ if
$w(p \rightarrow t)$ bound to a fixed constant.}
\begin{proof} 
{\it (Case for $d_{kt}$)}
The stability proof is similar to the one given for NPatterns.
With every edge $e_i$ removed from the graph, the ordering can be disturbed
by $n$ substitutions at most. With at most $kn$ substitutions in total,
$\lim_{n\rightarrow \infty} d_{kt}(w_1,w_2) = 0$.

\noindent {\it (Case for $d_{kt}$)}
Suppose $w(p \rightarrow t)$ is bound to a constant $c$.
The sum of the changes in scores for affected nodes cannot exceed $ck$,
and the rest follows.
\end{proof}

\end{document}